\newtheorem{theorem}{Theorem}[section] 
\newcommand{\figwidth}{125mm}
\newcommand{\figwidths}{100mm}
\begin{document}



\title{Estimating phylogenetic distances between genomic sequences based on 
the length distribution of $k$-mismatch common substrings} 

\author{Burkhard Morgenstern, Svenja Sch{\"o}bel, Chris-Andr{\'e} Leimeister}  

\affil{ University of G\"ottingen, Department of Bioinformatics, 
Goldschmidtstr. 1, 37077 G\"ottingen, Germany}  


\maketitle 

\begin{abstract}
Various approaches to alignment-free sequence comparison are based on 
the length of exact or inexact word matches between two input sequences. 
Haubold {\em et al.} (2009) showed  how the average number of substitutions
between two DNA sequences can be estimated based on the average length of 
exact common substrings. In this paper, we study the
length distribution of $k$-mismatch common substrings between two sequences. 
We show that the number of substitutions per position that have occurred 
since two sequences have evolved from their last common ancestor,
can be estimated from the position of a local maximum in the length 
distribution of their $k$-mismatch common substrings.   
\end{abstract}

\section{Introduction}
Phylogenetic distances between DNA or protein sequences are usually
estimated based on pairwise or multiple sequence alignments.
Since sequence alignment is computationally expensive, alignment-free
phylogeny approaches have become popular in recent 
years, see Vinga \cite{vin:14}  for a review. 
Some of these approaches compare the word composition  
\cite{hoe:rig:rag:06,sim:jun:wu:kim:09,cho:hor:lev:09,vin:car:fra:etal:12}
or spaced-word composition
\cite{lei:bod:hor:lin:mor:14,mor:zhu:hor:lei:15,hah:lei:oun:etal:16,noe:17}
 of sequences using a fixed word length or pattern of match and don't-care
positions, respectively.  
Other approaches are based on the {\em matching statistics} \cite{cha:law:94}, 
that is on the length of common substrings of the input
sequences \cite{uli:bur:tul:cho:06,com:ver:12}.  
All these methods are much faster than traditional alignment-based approaches.
A disadvantage of most word-based approaches to phylogeny reconstruction
is that they are not based on explicit models of molecular evolution.
Instead of estimating  distances in a statistically rigorous sense,
they only return rough measures of sequence similarity or dissimilarity.

The {\em average common substring (ACS)} approach  \cite{uli:bur:tul:cho:06} 
calculates for each position in one sequence the length of the longest
substring starting at this position that matches a substring of the
other sequence. The average length of these substring matches is then used
to quantify the similarity between two sequences based on 
information-theoretical 
considerations; these similarity values are finally transformed into symmetric
distance values. 
More recently, we generalized the {\em ACS} approach by considering
common substrings with up to $k$ mismatches instead of exact substring
matches \cite{lei:mor:14}. 
To calculate distance
values between two sequences from the average length of $k$-mismatch common 
substrings, 
we used the same information-theoretical approach as in {\em ACS}.  
Since there is no exact solution to the  
{\em $k$-mismatch longest common substring problem} that is fast enough
to be applied to long genomic sequences, we proposed a simple heuristic: 
we first search for longest common {\em exact} matches and then extend
these matches until the $k+1$st mismatch occurs. Distances are then 
calculated from the {\em average} length of these $k$-mismatch common substrings
similarly as in {\em ACS}; 
the implementation of this approach is called {\em kmacs}. 

Various algorithms have been proposed in recent years 
to calculate exact or approximate
solutions for the {\em $k$-mismatch average common substring problem} 
as a basis for phylogeny reconstruction 
\cite{alu:apo:tha:15,tha:cho:liu:apo:alu:15,piz:16,tha:apo:alu:16,apo:gue:lan:piz:16,piz:16,tha:cho:liu:etal:17,pet:gue:piz:17}.
Like {\em ACS} and {\em kmacs}, these approaches do not estimate 
the `real' pairwise distances between 
sequences in terms of substitutions per position. Instead, they calculate  
various sorts of distance measures that vaguely reflect evolutionary 
distances.  

To our knowledge, the first alignment-free approach to estimate the
phylogenetic distance 
between two DNA sequences in a statistically rigorous way was the
program {\em kr} by Haubold {\em et al.} \cite{hau:pfa:dom:wie:09}. 
These authors showed that the average number of
nucleotide substitutions per position between two DNA sequences  
can be estimated by calculating for each position $i$ in the first sequence
the length of the shortest substring starting at $i$ that does not occur 
in the second sequence, see also \cite{hau:pie:moe:wie:05,hau:wie:06}. 
This way, phylogenetic distances between DNA
sequences can be accurately estimated for distances up to around 
$0.5$ substitutions per position. 
Some other, more recent alignment-free approaches  also estimate 
phylogenetic distances based on a stochastic model of molecular evolution, 
namely {\em Co-phylog} \cite{yi:jin:13},  {\em andi} \cite{hau:klo:pfa:14},
an approach based on the {\em number} of (spaced-) word matches
\cite{mor:zhu:hor:lei:15} and {\em Filtered Spaced Word Matches}
\cite{lei:den:mor:17}.

In this paper, we propose a new  
approach to estimate phylogenetic distances based on the length distribution
of $k$-mismatch common substrings. 
The manuscript is organized as follows. In section~\ref{notation}, we
introduce some notation and the stochastic model of sequence evolution
that we are using. In section~\ref{acs}, we
recapitulate a result from \cite{hau:pfa:dom:wie:09} on the length
distribution of longest common substrings, which we generalize 
in section~\ref{k_mismatch_acs} to $k$-mismatch longest common substrings, and 
in section~\ref{kmacs}, we study the length distribution of $k$-mismatch
common substrings returned by the {\em kmacs} heuristic~\cite{lei:mor:14}.
In sections \ref{distance} and~\ref{implementation}, 
we introduce our new approach to estimate phylogenetic
distances and explain some implementation
details. Finally, sections~\ref{results} and \ref{discussion} 
report on benchmarking results, 
discusses these results and address some possible future developments. 

We should mention that sections \ref{k_mismatch_acs} and \ref{kmacs}
are not necessary to understand our novel approach to distance estimation, 
except for equation (\ref{binom_eq}) which gives the length
distribution of $k$-mismatch common substrings at given positions $i$ and $j$.
We added these two sections for completeness, and since the results could be
the basis for alternative ways to estimate phylogenetic distances. But readers
who are mainly interested in our approach to distance estimation can
skip sections \ref{k_mismatch_acs} and \ref{kmacs}.

\section{Sequence model and notation} 
\label{notation}
We use standard notation such as used in \cite{gus:97}.  
For  a sequence $S$ of length $L$ over some alphabet, $S(i)$ 
is the $i$-th character in $S$. $S[i..j]$ denotes the (contiguous)  
substring from $i$ to $j$; we say  that $S[i..j]$ is a 
{\em substring at $i$}. 
In the following, 
we consider two DNA sequences $S_1$ and $S_2$ that are assumed to have
descended from an unknown common ancestor under the 
{\em Jukes-Cantor} model \cite{juk:can:69}.  That is, we assume that 
substitutions at different positions are independent of each other, that
we have a constant substitution rate at all positions and  that all 
substitutions occur with the same probability. 
%
%
Thus, we have $p$ and $q$ with 
\[
   P\left(S_1(i) = S_2(j)\right)  
 = \left\{ 
\begin{array}{ll}
p & \text{ if } i = j\\ 
q & \text{ else } \\ 
\end{array}
\right.
\] 
Moreover, we use a gap-free model of evolution to simplify the 
considerations below.  
Note that, with a gap-free model, it is trivial to estimate the number
of substitutions since two sequences diverged from their last common
ancestor,  
 simply by counting the number of mismatches in the gap-free alignment and
then applying the usual Jukes-Cantor correction. 
However, we will to apply this simple model to real-world sequences with 
insertions and deletions where this trivial approach is not possible.  

\section{Average common substring length}
\label{acs} 
For positions $i$ and $j$ in sequence $S_1$ and  $S_2$, respectively, 
we define random variables 
\[X_{i,j} = \max\{l: X[i..i+l-1] = X[j..j+l-1] \} \]  
as the length of the longest substring 
at $i$ that exactly matches a substring at $j$.  Next, we define 
\[X_i = \max_{1\le j\le L} X_{i,j}\]
as the length of the {\em longest substring} at $i$ that matches a 
substring of $S_2$. 

In the following, we ignore edge effects which is justified if 
long sequences are compared since the probability of $k$-mismatch common 
substrings of length $m$ decreases rapidly if $m$ increases. 
With this simplification, we  have
\[ 
 P(X_{i,j} <n )  = 
 1 - P(X_{i,j} \ge n )  = 
\left\{ 
\begin{array}{ll}
1-p^n & \text{ if } i = j \\
1-q^n & \text{ else }  \\
\end{array}
\right. 
\]
If, in addition, 
we assume equilibrium frequencies for the nucleotides, {\em i.e.} if we
assume that each nucleotide occurs at each sequence position with probability
$0.25$,  
the random variables $X_{i,j}$ and $X_{i',j'}$ are independent of each
other whenever $j-i\not=j'-i'$ holds. 
In this case, we have for $n\le L-i+1$ 

\begin{equation}
\label{X_i_l_n}
\begin{aligned}
\MoveEqLeft
{P(X_i < n) = P(X_{i,1} < n \wedge \ldots \wedge X_{i,L} < n )}\\  
        &= P(X_{i,1} < n) \cdot \ldots \cdot P(X_{i,L} < n) \\ 
	& = (1-q^n)^{L-1} \cdot (1-p^n) 
\end{aligned}
\end{equation}
and 
\begin{equation} 
\nonumber 
\begin{aligned}
\MoveEqLeft{P(X_i = n) = P(X_i < n+1) - P(X_i < n) }\\
	& = (1-q^{n+1})^{L-1} \cdot (1-p^{n+1})   
 - (1-q^n)^{L-1} \cdot (1-p^n)  
\end{aligned}
\end{equation}
so the expected length of the longest common substring at 
a given sequence position is  
\begin{eqnarray} 
\label{1st_E_approx}
E(X) = 
\sum_{n=1}^L
n \cdot 
\left(
(1-q^{n+1})^{L-1} \cdot (1-p^{n+1})  
 - (1-q^n)^{L-1} \cdot (1-p^n) 
\right)
\end{eqnarray}  


\section{$k$-mismatch average common substring length}
 \label{k_mismatch_acs} 
Next, we generalize the above considerations by considering 
the average length of the 
$k$-{\em mismatch longest common substrings} between two sequences 
for some integer $k \ge 0$. That is, for a position $i$ in one
of the sequences, we consider the longest substring starting at $i$ that
matches some substring in the other sequence with a {\em Hamming distance} 
$\le k$.  Generalizing the above notation, we define random variables 
\[X_{i,j}^{(k)} = 
\max \left\{l: d_H\left(S_1[i..i+l-1],S_2[j..j+l-1]\right) \le k\right\} \] 
where $d_H(\cdot,\cdot)$ is 
the {\em Hamming distance} between two sequences. 
In other words, $X_{i,j}^{(k)}$
is the length 
of the longest substring starting at position $i$ in sequence $S_1$
that matches a substring starting at position $j$ in sequence $S_2$ with
to $k$ mismatches. Accordingly, we define 
\[ X_{i}^{(k)} = \max_j X^{(k)}_{i,j}\] 
as the length of the longest $k$-mismatch substring at position $i$. 
As pointed out by Apostolico {\em et al.}~\cite{apo:gue:lan:piz:16}, 
 $X^{(k)}_{i,j}$ follows a {\em negative binomial distribution}.  
More precisely, we have
$X^{(k)}_{i,j} \sim NB(n-k;k-1,p)$, and we can write 
\begin{equation}
\label{binom_eq}
	P\left( X^{(k)}_{i,j}  =  n \right) =  
	\left\{  
	  \begin{array}{ll} 
		  {n \choose k} p^{n-k} (1-p)^{k+1}  & \text{ if } i=j \\
		  {n \choose k} q^{n-k} (1-q)^{k+1}  & \text{ else }  \\
	  \end{array}
 \right.
\end{equation}
and
\begin{equation}
\label{binom_ge}
P\left( X^{(k)}_{i,j} \ge n \right)=
        \left\{  
          \begin{array}{ll} 
 \sum_{k'\le k} {n \choose k'} p^{n-k'} (1-p)^{k'} & \text{ if } i=j \\ 
 \sum_{k'\le k} {n \choose k'} q^{n-k'} (1-q)^{k'} & \text{ else } \\ 
          \end{array}
 \right.
\end{equation}
Generalizing (\ref{X_i_l_n}), we obtain for $n>k$  
\begin{eqnarray}
	\lefteqn{P\left(X_{i}^{(k)} < n\right) = }\\  
	  \nonumber 
	&& \left(1 - \sum_{k'\le k} {n \choose k'} q^{n-k'} 
 (1-q)^{k'}\right)^{L+i-1} \cdot 
\left( 1 - \sum_{k'\le k} {n \choose k'} p^{n-k'} (1-p)^{k'}\right)
\end{eqnarray}
while we have 
$$ P\left(X_{i}^{(k)} < n\right) = 
\left\{
\begin{array}{ll} 
1 & \text{ if } n > L-i+1\\  
0 & \text{ if } n \le k  \\ 
\end{array}
\right. 
$$ 
Finally, we obtain 
\begin{eqnarray}
        \label{eq_expect}
          \nonumber 
        \lefteqn{P\left(X_i^{(k)}=n\right)  
         = \left. \left(1 - \sum_{k'\le k} {n+1 \choose k'} q^{n+1-k'} (1-q)^{k'}\right)^{L+i-1} \right. }\\  
         && \cdot 
        \left( 1 - \sum_{k'\le k} {n+1\choose k'} p^{n+1-k'} (1-p)^{k'}\right)
        \\ 
        &-& 
        \nonumber 
        \left.
        \left(1 - \sum_{k'\le k} {n \choose k'} q^{n-k'} (1-q)^{k'}\right)^{L+i-1} 
         \cdot 
        \left( 1 - \sum_{k'\le k} {n \choose k'} p^{n-k'} (1-p)^{k'}\right)
        \right. 
\end{eqnarray}
from which one can obtain the expected length of the $k$-mismatch longest 
substrings. 
%
%

%

\section{Heuristic used in {\em kmacs}} 
\label{kmacs}
Since exact solutions for the {\em average $k$-mismatch common substring 
problem} are too time-consuming for  large sequence sets, 
the program {\em kmacs} \cite{lei:mor:14} uses a heuristic.  
In a first step, the program calculates for each position $i$ in one sequence, the length of 
the longest substring starting at $i$ that {\em exactly} matches a substring
of the other sequence. {\em kmacs} then calculates the length of the longest 
gap-free {\em extension} of this exact match with up to $k$ mismatches.  
Using standard indexing structures, this can be done in
$O(L\cdot k)$ time.

For  sequences $S_1, S_2$ as above and a position $i$ in $S_1$, 
  let $j^*$ be a position in $S_2$ such that the 
$X_i$-length substring starting at $i$ matches the $X_i$-length 
substring at $j^*$ in $S_2$. That is, the substring 
\[S_2[j^*..j^* + X_i -1]\]
is the longest substring of $S_2$ that matches a substring of $S_1$ 
at position $i$.   
In case there are several such positions in $S_2$, we assume for simplicity
that $j^* \not= i$ holds 
(in the following, we only need to distinguish the cases
$j^*=i$ and $j^*\not= i$, otherwise it does not matter how $j^*$
is chosen). 
Now, let the random variable 
$\tilde{X}^{(k)}_i$ be defined as the length
of the $k$-mismatch common substring starting at $i$ and $j^*$, so we have

\begin{equation}
\label{sum_kmacs}
\tilde{X}^{(k)}_i = X_{i,j^*}^{(k)}
 = X_i + X^{(k-1)}_{i+X_i,j^*+X_i} + 1   
\end{equation}
%

%
%
%

\begin{theorem}
\label{kmacs_heur_prob_theo}
For a pair of sequences as above, $1 \le i \le L$  and $m\le L + i$,
the probability of the heuristic {\em kmacs} hit of having a length of $m$
is given as
\begin{equation}
\begin{aligned}
\nonumber
\MoveEqLeft {P\left(\tilde{X}^{(k)}_i = m\right)} 
\\ 
=
&\ \  
p^{m-k+1}(1-p)^{k+1}  
\sum_{m_1+m_2=m} 
  (1 - q^{m_1+1})^{L-1}
 {m_2 \choose k-1} \\
&+ \sum_{m_1+m_2=m} \left[(1-q^{m_1+1})^{L-1} - (1-q^{m_1})^{L-1} \right] 
    \cdot (1-p^{m_1}) \\ 
& \ \ {m_2 \choose k-1}  
q^{m_2-k+1}(1-q)^k \\ 
\end{aligned}
\end{equation}
\end{theorem}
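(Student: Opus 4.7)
The plan is to decompose $\tilde{X}_i^{(k)}$ via equation~(\ref{sum_kmacs}) and condition on whether $j^{*}=i$ or $j^{*}\ne i$, since the extension's transition probabilities in~(\ref{binom_eq}) differ in those two cases. First, I would write
\[
 P\bigl(\tilde X^{(k)}_i = m\bigr) = \sum_{m_1+m_2=m} P\bigl(X_i=m_1,\ X^{(k-1)}_{i+X_i,\,j^{*}+X_i}=m_2\bigr),
\]
absorbing the constant ``$+1$'' offset from~(\ref{sum_kmacs}) into the indexing of $m_1$ and $m_2$. Splitting each summand according to the value of $j^{*}$ then yields $P(\tilde X^{(k)}_i=m)=T_{\mathrm{diag}}+T_{\mathrm{off}}$ for the contributions from the diagonal ($j^{*}=i$) and off-diagonal ($j^{*}\ne i$) cases.

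The key combinatorial step is to compute $P(X_i=m_1,\,j^{*}=i)$ and $P(X_i=m_1,\,j^{*}\ne i)$. Because the random variables $X_{i,j}$ for distinct diagonals $j-i$ are independent under the equilibrium-frequency Jukes--Cantor model, the joint event ``$X_{i,i}=m_1$ and $\max_{j\ne i}X_{i,j}\le m_1$'' factorises as $p^{m_1}(1-p)\cdot(1-q^{m_1+1})^{L-1}$, while the joint event ``$\max_{j\ne i}X_{i,j}=m_1$ and $X_{i,i}<m_1$'' factorises as $\bigl[(1-q^{m_1+1})^{L-1}-(1-q^{m_1})^{L-1}\bigr]\cdot(1-p^{m_1})$. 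The tie-breaking convention for $j^{*}$ is what decides the precise inequalities used (strict versus non-strict), and hence whether $1-q^{m_1}$ or $1-q^{m_1+1}$ appears in each cumulative-distribution factor.

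Conditional on $X_i=m_1$ and on the value of $j^{*}$, the extension length $X^{(k-1)}_{i+X_i,\,j^{*}+X_i}$ is independent of the preceding exact match, since the match depends on positions strictly before $i+X_i$ while the extension depends on positions from $i+X_i$ onward. Applying equation~(\ref{binom_eq}) with $k$ replaced by $k-1$ then gives the negative-binomial factor $\binom{m_2}{k-1}p^{m_2-k+1}(1-p)^k$ in the diagonal case and $\binom{m_2}{k-1}q^{m_2-k+1}(1-q)^k$ in the off-diagonal case. Multiplying these factors into the probabilities obtained in the previous step, and pulling the $m_1$-independent piece $p^{m_1+m_2-k+1}(1-p)^{k+1}=p^{m-k+1}(1-p)^{k+1}$ out of the $T_{\mathrm{diag}}$ sum, reproduces the first summand of the statement; keeping the $q$-dependent factor inside the $T_{\mathrm{off}}$ sum (since it still depends on $m_2$) reproduces the second.

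The main obstacle will be keeping the bookkeeping straight: both the ``$+1$'' offset in~(\ref{sum_kmacs}) and the tie-breaking convention for $j^{*}$ introduce subtle $\pm 1$ index shifts that govern which cumulative-distribution factor appears ($1-q^{m_1}$ versus $1-q^{m_1+1}$, and analogously for $p$). The independence assertions used above are a direct consequence of the i.i.d.\ structure of the Jukes--Cantor model with equilibrium frequencies, so once the index shifts are verified the remainder of the argument is essentially algebraic.
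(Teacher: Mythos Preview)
Your proposal is correct and follows essentially the same route as the paper: condition on whether $j^{*}=i$ or $j^{*}\ne i$, decompose the length as $m_1+m_2$ via~(\ref{sum_kmacs}), factor the joint event using independence of the $X_{i,j}$ across diagonals to obtain $p^{m_1}(1-p)(1-q^{m_1+1})^{L-1}$ and $\bigl[(1-q^{m_1+1})^{L-1}-(1-q^{m_1})^{L-1}\bigr](1-p^{m_1})$ respectively, and then attach the negative-binomial extension factor from~(\ref{binom_eq}). The only cosmetic difference is that the paper writes the joint probabilities as conditional probabilities times marginals (with the $P(j^{*}=i)$ and $P(j^{*}\ne i)$ factors cancelling), whereas you go directly to the joint; your remarks about the ``$+1$'' offset and the tie-breaking convention are well-placed, since the paper itself is loose on exactly these points.
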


\begin{proof}

Distinguishing between `homologous' and `background' matches, we can 
write
\begin{equation}
\label{kmacs_heur_len}
\begin{aligned}
\MoveEqLeft {P\left(\tilde{X}^{(k)}_i = m\right)} 
 =  P\left(\tilde{X}^{(k)}_i = m\middle| j^*=i\right) P(j^*=i) \\ 
   +& P\left(\tilde{X}^{(k)}_i = m\middle| j^*\not=i\right) P(j^*\not=i)  
\end{aligned}
\end{equation}
and with (\ref{binom_eq}), we obtain
\begin{equation}
\label{hom_eq1}
\begin{aligned}
\MoveEqLeft P\left(\tilde{X}^{(k)}_i = m\middle| j^*=i\right) \\
&= \sum_{m_1+m_2=m} P(X_i = m_1 |  j^*=i) P\left(X_{i+m_1,i+m_1}^{(k-1)}=m_2\right) \\  
&= \sum_{m_1+m_2=m} P(X_i = m_1 |  j^*=i) {m_2 \choose k-1}  
p^{m_2-k+1}(1-p)^k 
\end{aligned}
\end{equation}
and 
\begin{equation}
\label{hom_eq2}
\begin{aligned}
\MoveEqLeft P(X_i = m_1 |  j^*=i)  
= \frac{P(X_{i,i}=m_1 \wedge j^*=i)}{P(j^*=i) } \\  
&= \frac{P(X_{i,i}=m_1 \wedge X_{i,i} \ge X_{i,j}, j\not= i)}{P(j^*= i)} \\ 
&= \frac{P(X_{i,i}=m_1 \wedge  X_{i,j}\le m_1, j\not= i)}{P(j^*= i)} \\ 
&= \frac{p^{m_1}(1-p) \cdot (1 - q^{m_1+1})^{L-1}}{P(j^*= i)} \\ 
\end{aligned}
\end{equation}
%
%
%
so with (\ref{hom_eq1}) and (\ref{hom_eq2}), 
the first summand in (\ref{kmacs_heur_len}) becomes
\begin{equation}
\begin{aligned}
\MoveEqLeft  P\left(\tilde{X}^{(k)}_i = m\middle| j^*=i\right) P(j^*=i) \\ 
=& \sum_{m_1+m_2=m} P(X_i = m_1 |  j^*=i) {m_2 \choose k-1}  
p^{m_2-k+1}(1-p)^k \cdot P(j^*=i) \\ 
=& \sum_{m_1+m_2=m} 
\frac{p^{m_1}(1-p) \cdot (1 - q^{m_1+1})^{L-1}}{P(j^*= i)} \\ 
& \ \  {m_2 \choose k-1}  
p^{m_2-k+1}(1-p)^k \cdot P(j^*=i) \\ 
=& \sum_{m_1+m_2=m} 
  (1 - q^{m_1+1})^{L-1}
 {m_2 \choose k-1}  
p^{m_1+m_2-k+1}(1-p)^{k+1} \\ 
=&\ \  
p^{m-k+1}(1-p)^{k+1}  
\sum_{m_1+m_2=m} 
  (1 - q^{m_1+1})^{L-1}
 {m_2 \choose k-1} \\ 
\end{aligned}
\end{equation}

Similarly, for the second summand in (\ref{kmacs_heur_len}), we note that
\begin{equation}
\label{bg_eq1}
\begin{aligned}
\MoveEqLeft P\left(\tilde{X}^{(k)}_i = m\middle| j^*\not=i\right) \\
&= \sum_{m_1+m_2=m} P(X_i = m_1 |  j^*\not=i) {m_2 \choose k-1}  
q^{m_2-k+1}(1-q)^k 
\end{aligned}
\end{equation}
and
\begin{equation}
\label{bg_eq2}
\begin{aligned}
\MoveEqLeft P(X_i = m_1 |  j^*\not=i)  
= \frac{P(X_{i,j^*}=m_1 \wedge j^*\not=i)}{P(j^*\not=i) } \\  
&= \frac{P(X_{i,j^*}=m_1 \wedge X_{i,i} < X_{i,j^*})}{P(j^*\not= i)} \\ 
&= \frac{P(X_{i,j^*}=m_1 \wedge X_{i,i} < m_1 )}{P(j^*\not= i)} \\ 
&= \frac{P(\max_{j\not=i}X_{i,j}=m_1 \wedge X_{i,i} < m_1 )}{P(j^*\not= i)} \\ 
&= \frac{P(\max_{j\not=i}X_{i,j}=m_1) \cdot P( X_{i,i} < m_1 )}{P(j^*\not= i)}\\ 
&= \frac{P(\max_{j\not=i}X_{i,j}=m_1)\cdot P( X_{i,i} < m_1)}{P(j^*\not= i)}\\ 
&= \frac{\left[(1-q^{m_1+1})^{L-1} - (1-q^{m_1})^{L-1} \right] 
    \cdot (1-p^{m_1})}{P(j^*\not= i)}\\ 
\end{aligned}
\end{equation}
Thus, the second summand in (\ref{kmacs_heur_len}) is given as
\begin{equation}
\nonumber
\begin{aligned}
\MoveEqLeft {P\left(\tilde{X}^{(k)}_i = m\middle| j^*\not=i\right) P(j^*\not=i)}\\ 
=& \sum_{m_1+m_2=m} P(X_i = m_1 |  j^*\not=i) {m_2 \choose k-1}  
q^{m_2-k+1}(1-q)^k \cdot P(j^*\not= i) \\  
=& \sum_{m_1+m_2=m} \frac{\left[(1-q^{m_1+1})^{L-1} - (1-q^{m_1})^{L-1} \right] 
    \cdot (1-p^{m_1})}{P(j^*\not= i)} \\ 
& \ \ {m_2 \choose k-1}  
q^{m_2-k+1}(1-q)^k \cdot P(j^*\not= i) \\  
=& \sum_{m_1+m_2=m} \left[(1-q^{m_1+1})^{L-1} - (1-q^{m_1})^{L-1} \right] 
    \cdot (1-p^{m_1}) \\ 
& \ \ {m_2 \choose k-1}  
q^{m_2-k+1}(1-q)^k \\  
\end{aligned}
\end{equation}

\end{proof}

\begin{figure}
\begin{center}
\includegraphics[width=\figwidths]{./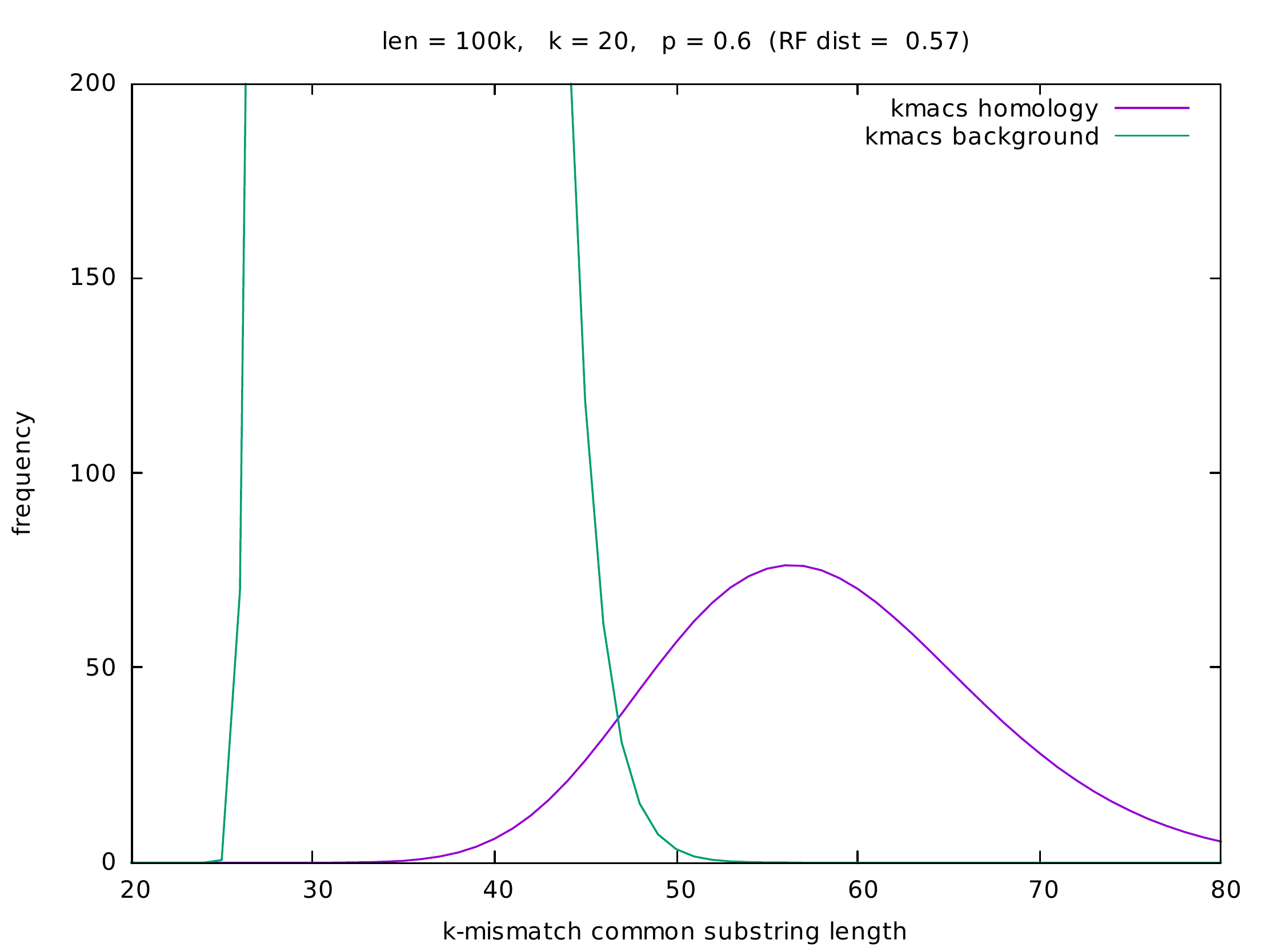}

\includegraphics[width=\figwidths]{./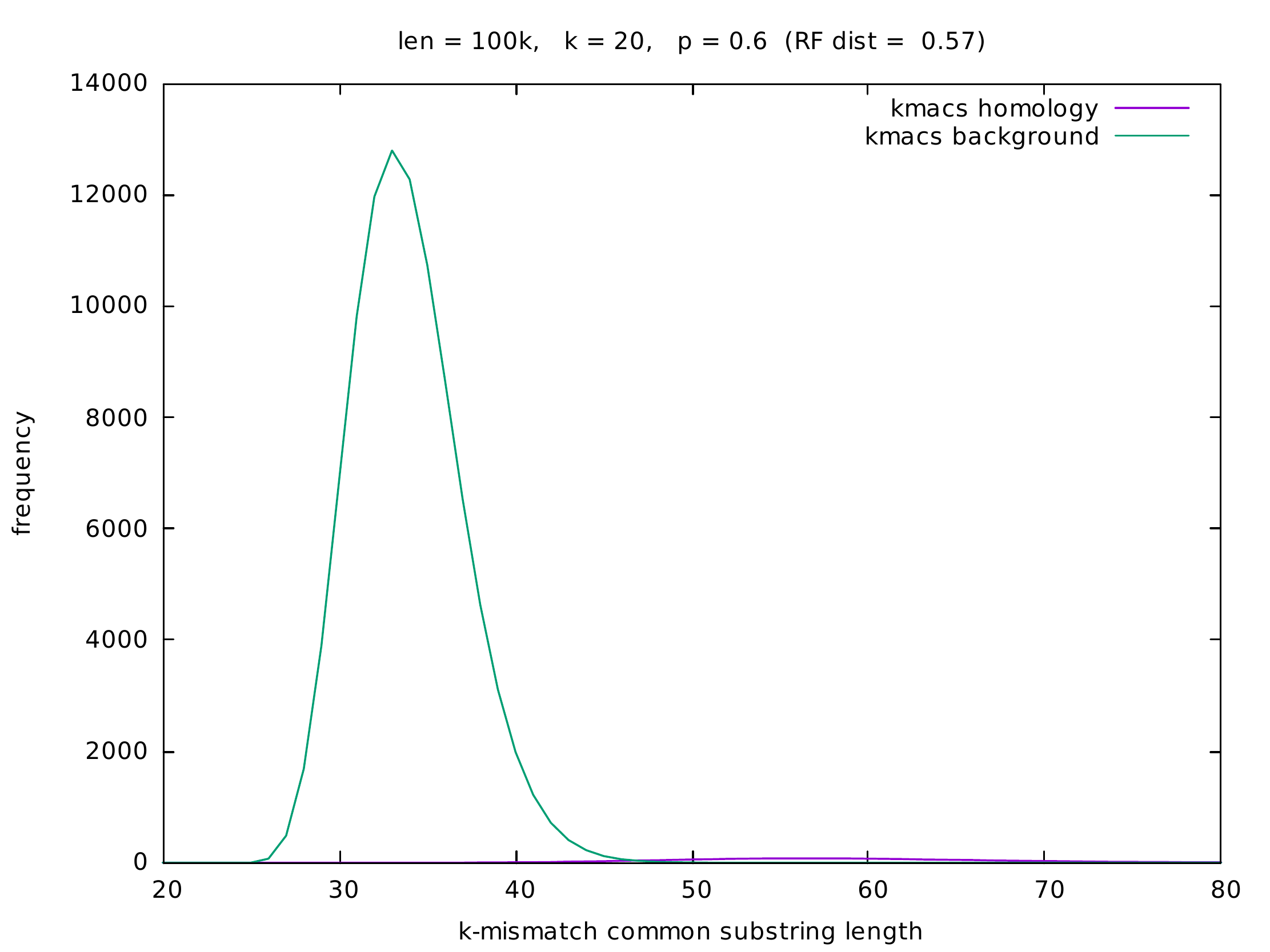}
\end{center}
\caption{\label{overlap_kmacs}Length distribution of the background
and homologous $k$-mismatch longest common substrings for a pair of  
DNA sequences under the {\em Jukes-Cantor} model. 
For each possible length, the {\em expected} 
number of $k$-mismatch longest common 
substrings of this length returned by the {\em kmacs} heuristic 
is calculated using theorem~\ref{kmacs_heur_prob_theo}
 for an indel-free 
pair of sequences of length $L=100kb$, a match probability $p=0.6$ 
(corresponding to 0.57 substitutions per position) and $k=20$. 
}
\end{figure}

For $1\le m \le L$, the expected number of $k$-mismatch common substrings
of length $m$ returned by the {\em kmacs} heuristics is given as  
$L \cdot P\left(\tilde{X}^{(k)}_i = m\right)$ and can be calculated 
using theorem~\ref{kmacs_heur_prob_theo}.
In  Figure~\ref{overlap_kmacs}, these values are plotted against $m$ 
for $L=100$~{\em kb}, $p=0.6$ and $k=20$.

%
%
%
%
%
%

\section{Distance estimation} 
\label{distance}
Using theorem \ref{kmacs_heur_prob_theo}, one could 
estimate the match probability $p$ -- and thereby the average number of
substitutions per position -- from
the {\em empirical} average length of the $k$-mismatch common substrings 
returned by {\em kmacs}  
in a moment-based approach, similar to the approach proposed 
in~\cite{hau:pfa:dom:wie:09}.

A problem with this moment-based approach is that, for realistic values
of $L$ and $p$, one has
$P(j^*=i)  \ll  P(j^*\not=i)$,
so the above sum is heavily dominated by the `background' part, {\em i.e.}
by the second summand in (\ref{kmacs_heur_len}). 
For the parameter values used in Figure~\ref{overlap_kmacs}, for example,
only 1 percent of the matches returned by {\em kmacs} represent homologies
while 99 percent are background noise.  
There are, in principle, two ways to circumvent this problem. First, one could
try to separate homologous from background matches using a suitable 
threshold values, similarly as we have done it in our {\em Filtered 
Spaced Word Matches} approach \cite{lei:soh:mor:17}. But this is more
difficult for $k$-mismatch common substrings, since there is much more
overlap between homologous and background matches than for {\em Spaced-Word}
matches, see Figure~\ref{overlap_kmacs}.

There is an alternative to this moment-based approach, however.
As can be seen in  Figure~\ref{overlap_kmacs}, the length distribution
of the $k$-mismatch longest common substrings is {\em bimodal},
with a first peak in the distribution corresponding to the background
matches and the second peak corresponding to the homologous matches. 
We show that the number of substitutions per positions can be easily
estimated from the position of this second peak. 

To simplify the following calculations, we 
ignore the longest exact match in equation (\ref{sum_kmacs}),
and consider only the length of the gap-free `extension' of this match.  
To model the length of these $k$-mismatch {\em extensions}, 
we define define random variables  
\begin{equation}
\label{kmacs_extension}
	\hat{X}^{(k)}_i = \tilde{X}_{i}^{(k+1)}
 - X_i =  X^{(k)}_{i+X_i+1,j^*+X_i+1}    
\end{equation}
In other words, for a position $i$ in sequence $S_1$, we are looking for
the longest substring starting at $i$ that exactly matches a substring
of $S_2$. If $j^*$ is the starting position of this substring of $S_2$, 
we define $\hat{X}^{(k)}_i$ as the length of the longest possible
substring of $S_1$ starting at position $i+ X_i + 1$ 
that matches a substring of $S_2$ 
starting at position  $j^* + X_i + 1$ with a Hamming distance of up to $k$.

\begin{theorem}
\label{prob_increase}
Let $\hat{X}^{(k)}_i$ be defined as in (\ref{kmacs_extension}). Then $\hat{X}^{(k)}_i$
is the sum of two unimodal distributions, the a `homologous' and a `background'
contribution, and the maximum of the `homologous' contribution  is reached at
\[ m_H = \left\lceil \frac{k}{1-p} -1 \right\rceil \]
and the maximum of the `background contribution' is reached at 
\[ m_B = \left\lceil \frac{k}{1-q} -1 \right\rceil \]

\end{theorem}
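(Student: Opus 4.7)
The plan is to decompose the distribution of $\hat{X}^{(k)}_i$ by conditioning on whether the longest exact match comes from the homologous alignment ($j^*=i$) or from a background position ($j^*\not=i$), to identify each conditional distribution with the appropriate case of equation~(\ref{binom_eq}), and then to locate the two modes by the standard ratio test for discrete distributions. Concretely, the first step will be to apply the law of total probability, exactly as in the proof of Theorem~\ref{kmacs_heur_prob_theo}, to write $P(\hat{X}^{(k)}_i = m)$ as the sum of a homologous contribution $P(\hat{X}^{(k)}_i = m \mid j^*=i)\,P(j^*=i)$ and a background contribution $P(\hat{X}^{(k)}_i = m \mid j^*\not=i)\,P(j^*\not=i)$. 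Since unimodality and the location of a mode are unaffected by multiplication by a positive constant, it then suffices to analyse the two conditional factors separately.

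Next, I would use equation~(\ref{kmacs_extension}) to rewrite the extension as $\hat{X}^{(k)}_i = X^{(k)}_{i+X_i+1,\,j^*+X_i+1}$, so that conditional on $\{j^*=i\}$ the two compared positions coincide, while conditional on $\{j^*\not=i\}$ they differ. By the first, respectively second, case of~(\ref{binom_eq}), this should identify the two conditional distributions as
\[ P\!\left(\hat{X}^{(k)}_i = m \,\middle|\, j^*=i\right) = \binom{m}{k} p^{m-k}(1-p)^{k+1} \]
and
\[ P\!\left(\hat{X}^{(k)}_i = m \,\middle|\, j^*\not=i\right) = \binom{m}{k} q^{m-k}(1-q)^{k+1}. \]

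To locate the mode of any distribution of the generic form $P(N=n)=\binom{n}{k} r^{n-k}(1-r)^{k+1}$, the plan is to compute the forward ratio $P(N=n+1)/P(N=n) = r(n+1)/(n+1-k)$, which is strictly decreasing in $n$ on the support $n\ge k$ and therefore forces the distribution to be unimodal. The ratio is at least $1$ precisely when $n \le k/(1-r)-1$, so the mode sits at $\lceil k/(1-r)-1 \rceil$. Specialising to $r=p$ then yields $m_H$, and to $r=q$ yields $m_B$, completing the argument.

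The step I expect to be the main obstacle is the identification of the two conditional distributions with the clean forms coming from~(\ref{binom_eq}): the extension's endpoints $i+X_i+1$ and $j^*+X_i+1$ are themselves random, and conditioning on $\{j^*=i\}$ or on its complement implicitly conditions on character values within and at the boundary of the exact-match window. The argument will have to invoke the iid Jukes-Cantor generation with equilibrium frequencies, which guarantees that the characters at positions strictly beyond the exact-match window are independent of those within and before it, so that the extension really decouples into a fresh $X^{(k)}_{\cdot,\cdot}$-type variable governed by~(\ref{binom_eq}) regardless of the realised values of $X_i$ and $j^*$.
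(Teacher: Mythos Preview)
Your proposal is correct and follows essentially the same route as the paper: condition on $\{j^*=i\}$ versus $\{j^*\neq i\}$, identify each conditional law with the relevant case of~(\ref{binom_eq}), and locate the mode via the forward ratio $P(N=n+1)/P(N=n)=r(n+1)/(n+1-k)$. If anything, you are slightly more explicit than the paper about why the ratio being strictly decreasing forces unimodality and about the independence needed to decouple the extension from the exact-match window.
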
 

\begin{proof} 
As in (\ref{binom_eq}), the distribution of $\hat{X}^{(k)}_i$ 
conditional on $j^*=i$ or $j^*\not=i$, respectively, can be easily calculated 
as 
\begin{equation}
\nonumber
	P\left( \hat{X}^{(k)}_i = m \middle| j^*=i \right) = 
	P\left( X^{(k)}_{i+ X_i+1,i+ X_i+1}  =  m \right)   
 =   {m \choose k} p^{m-k} (1-p)^{k+1} 
\end{equation}
and
\begin{equation}
\nonumber
	P\left( \hat{X}^{(k)}_i = m \middle| j^*\not= i \right)  
 =   {m \choose k} q^{m-k} (1-q)^{k+1} 
\end{equation}
so we have 
\begin{equation}
	\label{p_kmacs_ext} 
	\begin{aligned}
		P\left( \hat{X}^{(k)}_i = m\right)  
		&=  P(j^* = i)  {m \choose k} p^{m-k} (1-p)^{k+1} \\
		&+ P(j^*\not= i)  {m \choose k} q^{m-k} (1-q)^{k+1} 
	\end{aligned}
\end{equation}

For the {\em homologous} part 
\[ H_k(m) = {m \choose k} p^{m-k} (1-p)^{k+1}  \]
we obtain the recursion
\[ H_k(m+1)= P(j^*= i)  \frac{ (m+1)}{m+1-k}\cdot p \cdot H_k(m) \] 
so we have $H_k(m) < H_k(m+1)$ if and only if 
\begin{equation}
\label{increas_cond}
\frac{ m+1-k }{m+1} < p 
\end{equation}
Similarly, the `background contribution' 
\[ B_k(m) = P(j^*\not= i) {m \choose k} q^{m-k} (1-q)^{k+1}  \]
is increasing until
\begin{equation}
\nonumber
\frac{ m+1-k }{m+1} < q 
\end{equation}
holds, which concludes the proof of the theorem 
\end{proof} 

Theorem~\ref{prob_increase} gives us an easy way to estimate the match 
probability $p$:  
By inserting the second local maximum $m_{\max}$ of the 
empirical distribution of $\hat{X}_i$ into  
(\ref{increas_cond}),  we obtain
\begin{equation}
\label{p_estimate}
	\hat{p} \approx \frac{ m_{\max}+1-k }{m_{\max}+1} 
\end{equation}

\begin{figure}
\begin{center}
\includegraphics[width=\figwidths]{./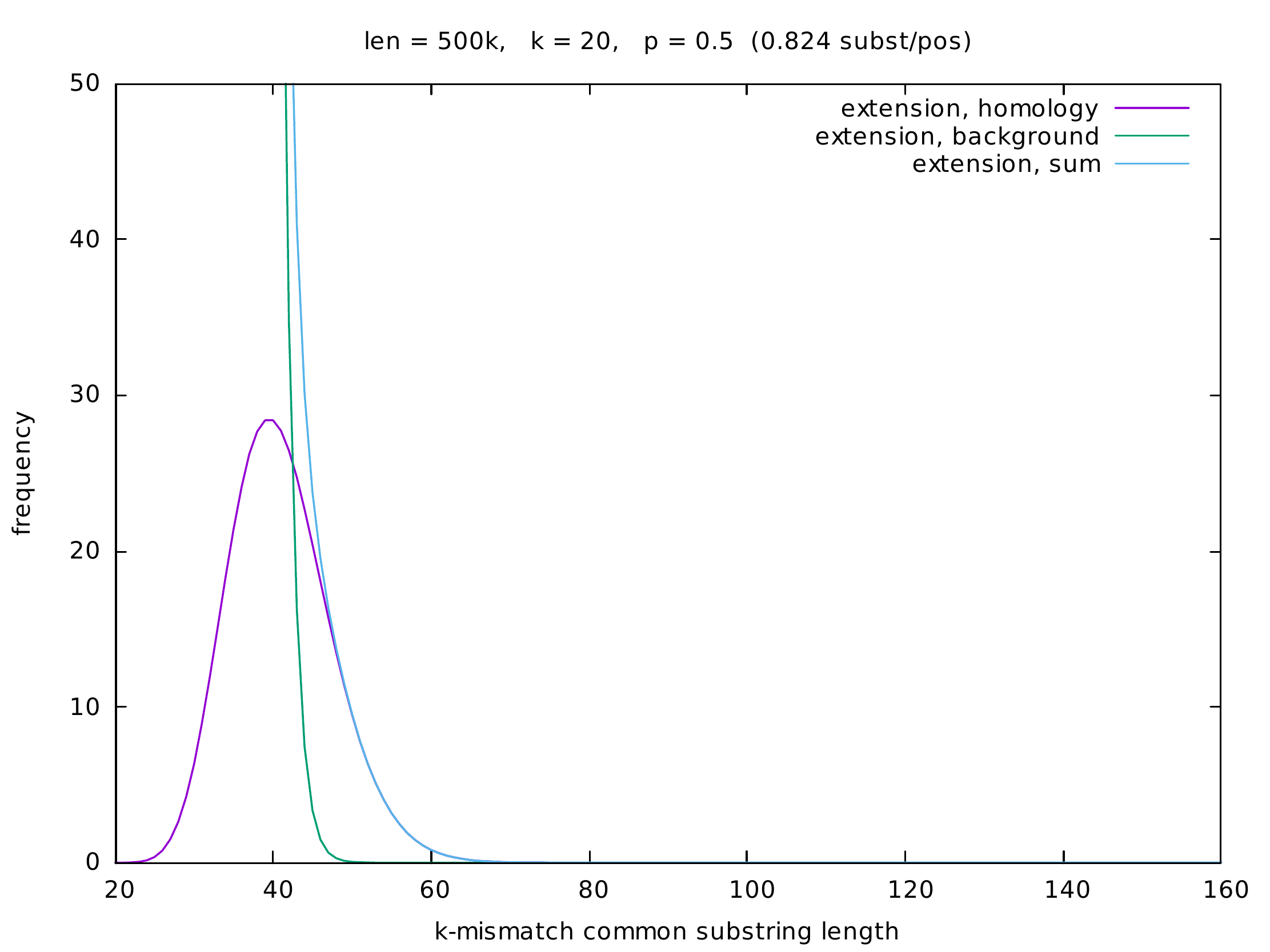}

\includegraphics[width=\figwidths]{./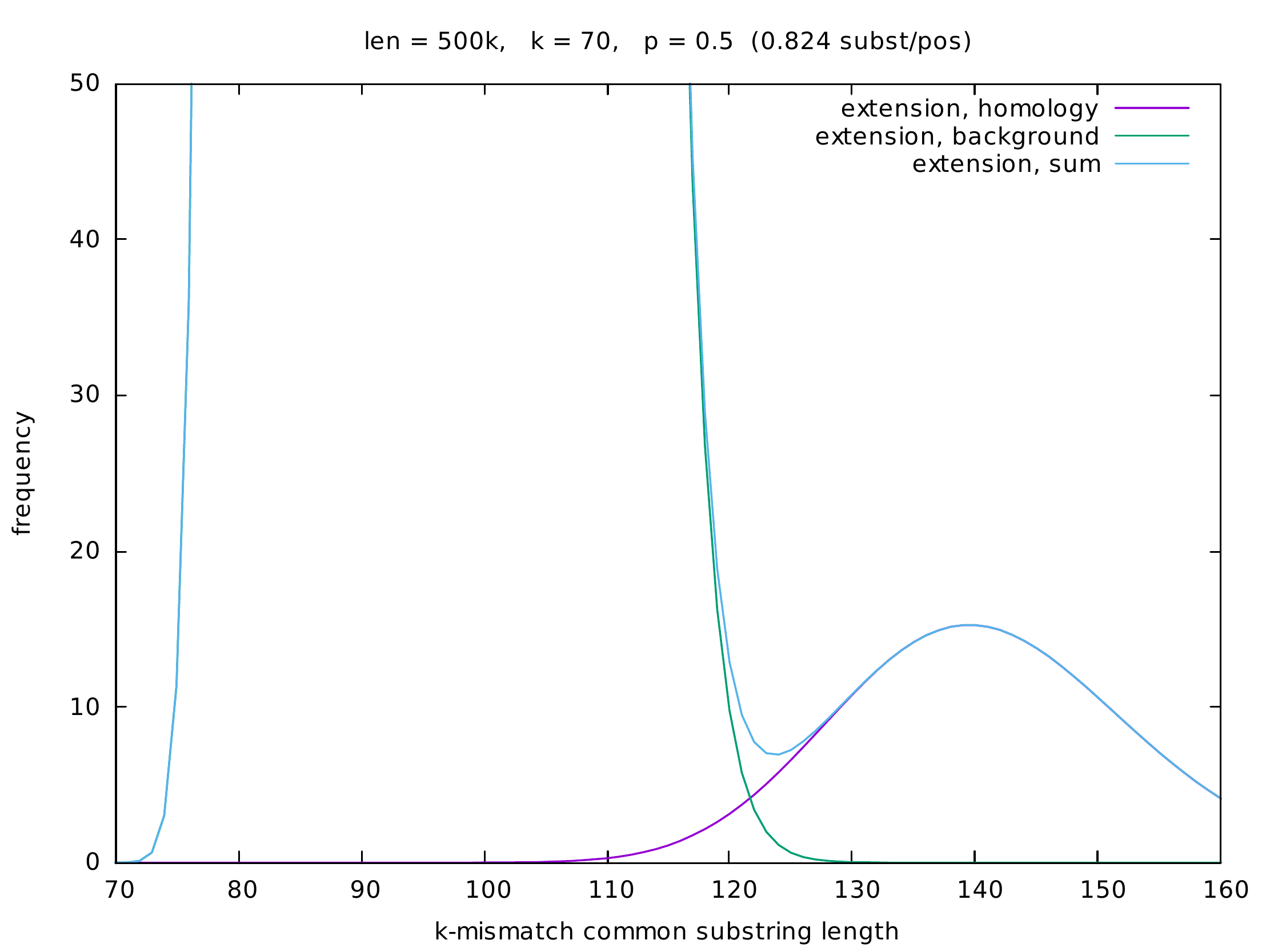}
\end{center}
\caption{\label{kvar}
Detail of the expected 
length distributions of the $k$-mismatch extensions in {\em kmacs}
for a pair of sequences of length $L=500$ {\em kb} with a match probability
of $p=0.5$ for $k=10$ (top) and $k=70$ (bottom). 
Expected frequencies were calculated using
equation~(\ref{p_kmacs_ext}), distinguishing between `homologous' and 
`background' matches.
A large enough value of~$k$ is necessary to detect the second peak in the
distribution that corresponds to the `homologous' matches.  
}
\end{figure}

\begin{figure}
\begin{center}
\includegraphics[width=\figwidth]{./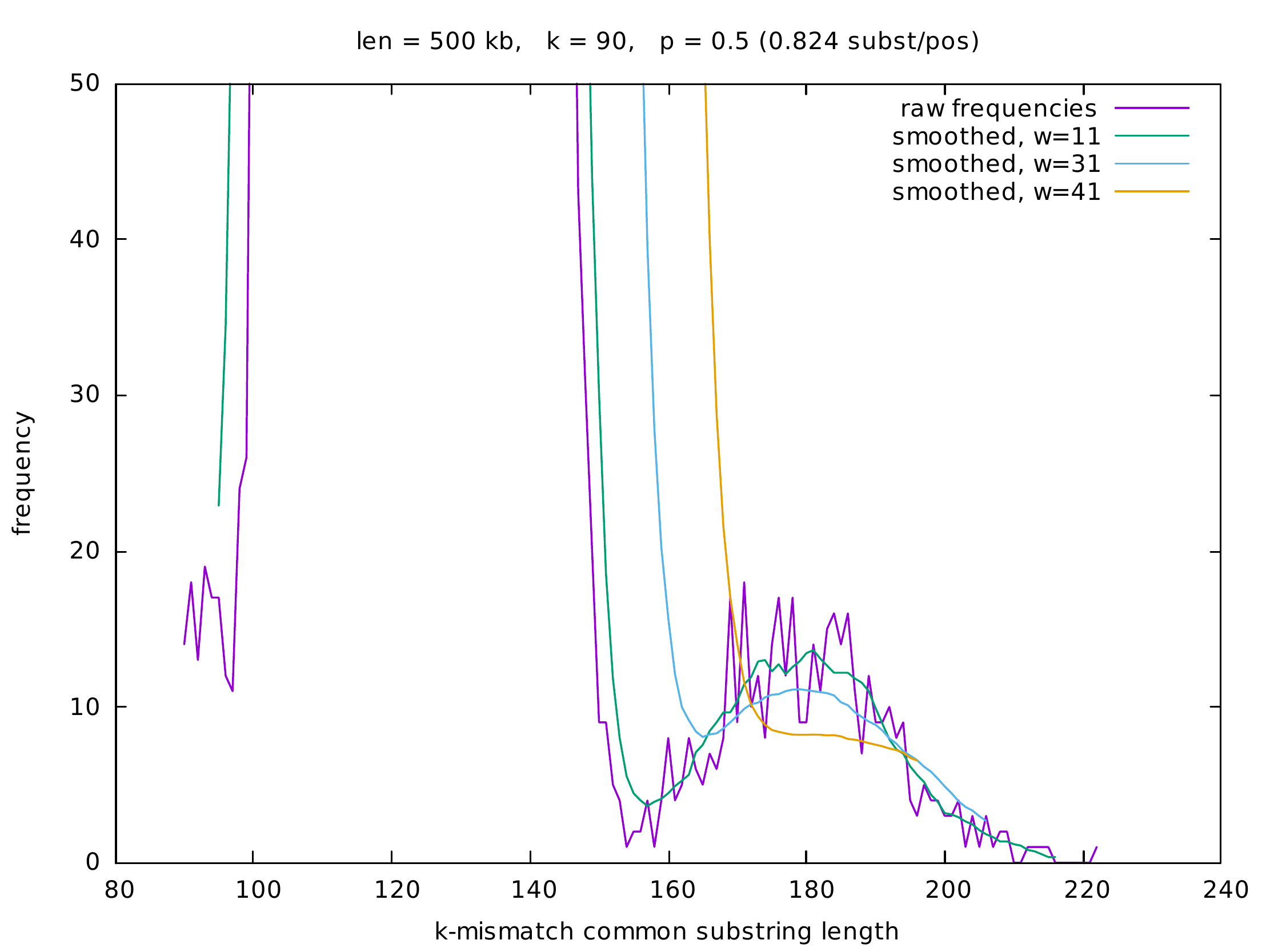}
\end{center}
\caption{\label{k90plot}
Detail of the length distribution of the $k$-mismatch extensions in {\em kmacs} 
for a pair of simulated DNA sequences of length $L=500$ {\em kb} with $k=90$. 
(raw frequencies and smoothed distributions). Different parameters were used for
for the width~$w$ of the smoothing window.  
The hight of the `homologous' peak is > 50,000}
\end{figure}

For completeness, we calculate 
the probability $P(j^* = i)$. First, we note that, for 
all $i$,  we have 
\begin{equation}
\nonumber
P\left(X_{i,j} < X_{i,i} \text{ for all } j\not=i\right) \le 
P(j^* = i) \le
P(X_{i,j} \le X_{i,i} \text{ for all } j\not=i)  
\end{equation}
and for all $m$ and $i\not= j$,  
\[ P(X_{i,j} < m) = 1-q^m\]
and
\[ P(X_{i,j} \le  m) = 1-q^{m+1}\]
hold. Thus, we obtain
\begin{equation}
\label{hom_prob}
\begin{aligned}
\MoveEqLeft{ P\left(X_{i,j} < X_{i,i} \text{ for all } j\not=i\right)  = \sum_m  P\left(X_{i,j} < X_{i,i} \text{ for all } j\not=i| X_{i,i} = m \right) P( X_{i,i} = m) }\\
 =& \sum_m  P\left(X_{i,j} < m  \text{ for all } j\not=i \right) P( X_{i,i} = m) \\  
 =& \sum_m \prod_{j\not=i}  P( X_{i,j} < m)    P( X_{i,i} = m)  \\
        &=  \sum_m  (1-q^m)^{L-1}  p^m (1-p) 
\end{aligned}
\end{equation}
and similarly
\begin{equation}
 P\left(X_{i,j} \le X_{i,i} \text{ for all } j\not=i\right)  =  \sum_m  (1-q^{m+1})^{L-1}  p^m (1-p) 
\end{equation}

\section{Implementation}
\label{implementation}
For each position $i$ in one of two input sequences, {\em kmacs} first 
calculates the length of the longest substring starting at $i$ that
exactly matches a substring of the other sequence. 
For a user-defined parameter $k$, the program then calculates the
length of the longest possible gap-free extension with up to $k$ mismatches
of this exact hit.  
The original version of the program uses the average length of these
$k$-mismatch common substrings (the initial exact match plus the $k-1$-mismatch
extension after the first mismatch) 
to calculate a distance between two sequences. 
We modified {\em kmacs} to output the length of the {\em extensions}
of the identified exact matches. Thus,  
to find $k$-mismatch common substrings, we ran {\em kmacs} with parameter
$k+1$, and we consider the length of the $k$-mismatch extension  
{\em after} the first mismatch.  
For each possible length $m$, the modified program outputs
the number $N(m)$ of $k$-mismatch extensions of length $m$, starting
after the first mismatch after the respective longest exact match.

\begin{figure}
\begin{center}
\includegraphics[width=11cm]{./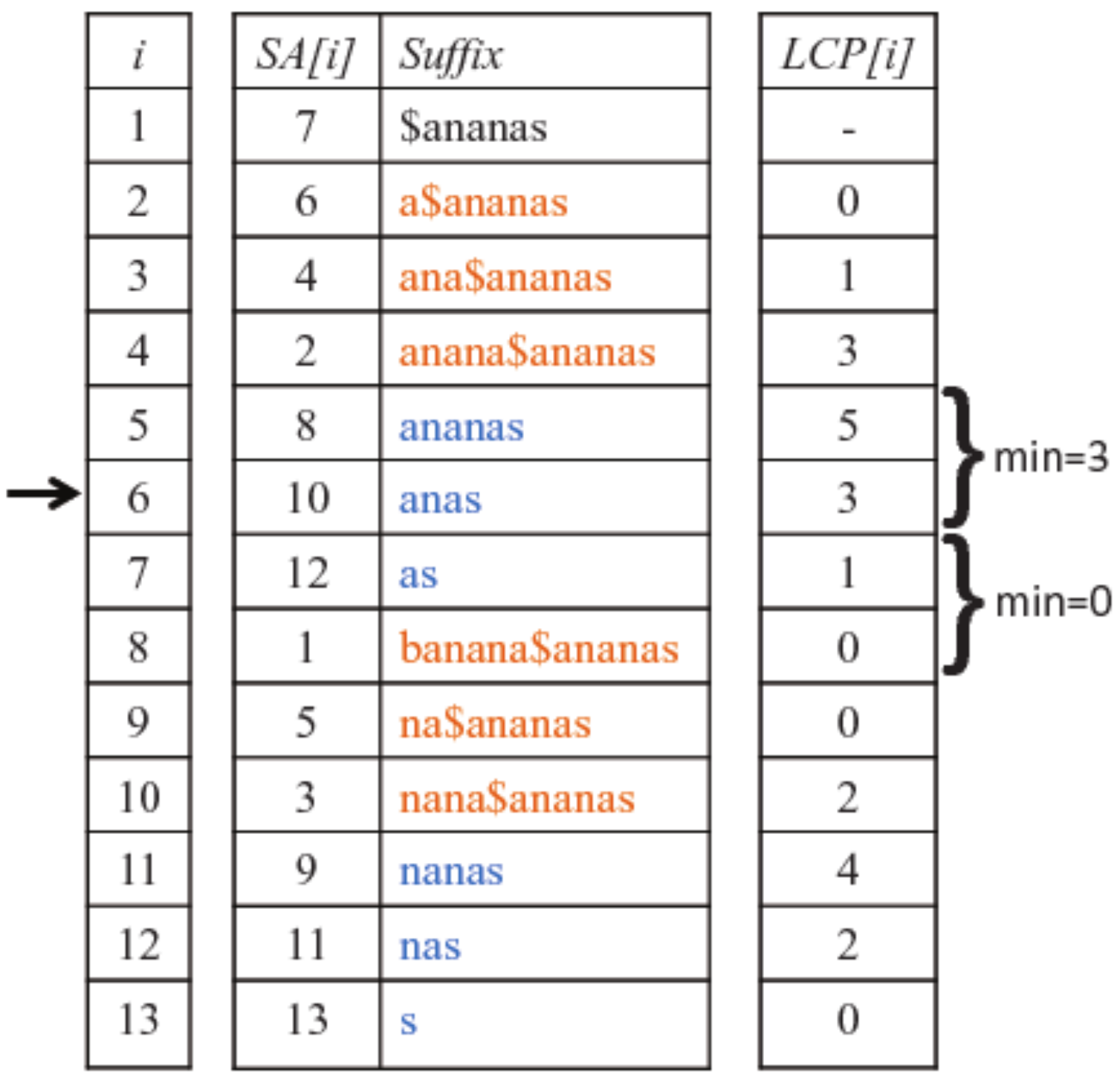}
\end{center}
\caption{\label{esa}Enhanced suffix array for sequences {\tt banana}
and {\tt ananas}. Suffixes of the concatenated sequence are lexicographically
ordered; a {\em longest common prefix (LCP)} array indicates the length of the
longest common prefix of a suffix with its predecessor in the list
(Figure from~\cite{lei:mor:14}).}  
\end{figure}

To find for each position $i$ in one sequence the length of the 
longest string at $i$ 
matching a substring of the other sequences, {\em kmacs} uses a standard
procedure based on  
{\em enhanced suffix arrays} \cite{man:mye:90}, see Figure~\ref{esa}.
To find the longest exact match starting at $i$, the algorithm goes to the
corresponding position in the suffix array. It then goes in both directions,
up and down, in the suffix array until the first entry from the respective
other sequence is found. In both cases, the minimum of the {\em LCP} values
is recorded. The maximum of these two minima is the length of the longest 
substring in the other sequence matching a substring starting at $i$. 
In Figure~\ref{esa}, for example, if $i$ is position 3 in the string
{\tt ananas}, i.e. the 10th position in the concatenate string, 
the minimum {\em LCP} value 
until the first entry from {\tt banana} 
is found, is 3 if one goes up the array and 0 if one goes down. Thus,
the longest string in {\tt banana} matching a substring starting at 
position~3 in {\tt ananas} has length~3.  

Note that, for a position $i$ in one sequence, it is possible that there 
exist more than one maximal 
substring in the other sequence matching a substring at $i$. In this case,
our modified algorithm uses {\em all} of these maximal substring
matches, {\em i.e.} all maximal exact string matches are extended as 
described above. All these hits can be easily found in the suffix array by
extending the search in upwards or downwards direction until the minimum
of the {\em LCP} entries decreases. In the above  
example, there is a second occurrence of {\tt ana}
in {\tt banana} which is found by moving one more position upwards (the
corresponding {\em LCP} value is still 3).

In addition, we modified the original {\em kmacs} to ensure that for each pair
$(i,j)$ of positions from the two input sequences, only {\em one}  single
extended $k$-mismatch common substring is considered. The rationale behind this
is as follows: if the two input sequences share a long common substring $S$, 
then there will be many positions $i$ in the first sequence within $S$ 
such that the longest exact string match at $i$ matches to
a substring in $S$ in the second sequence. Thus, all these exact substring
matches are identical up to different starting positions, so they end at 
the same first mismatch between  $S_1$ and $S_2$. Consequently, 
the $k$-mismatch extensions
of these exact matches are all exactly the same. As a result, for real-world
sequences with long exact substrings, isolated positions $m$ in the
length distribution of the $k$-mismatch common substrings can be observed
with very large values $N(m)$ while $N(m')=0$ for other values $m'$ around 
$m$.

To further process the length distribution returned by the modified {\em kmacs},
we implemented a number of {\em Perl} scripts. 
First, the length distribution of the $k$-mismatch common substrings is
smoothed using a window of length~$w$. 
Next, we search for the second local maximum in this smoothed length
distribution. This second peak should represent the {\em homologous} 
$k$-mismatch common substrings, while the first, larger peak represents
the {\em background} matches, see Figures~\ref{k90plot} and 
\ref{kvar}.
A simple script identifies the position $m^*$ of the second highest local peak 
under two side constraints: we require the height $N(m^*)$ of the second peak to
be substantially smaller than the global maximum, and we required for
that $N(m^*)$ is larger than $N(m^*-x)$. Quite arbitrarily, we required the
second peak to be 10 times smaller than the global maximum peak, and we
used a value of $x=4$. These constraints were introduced to prevent the 
program to identify small side peaks within the background peak. 

Finally, we use the position $m^*$ of the second largest peak in the
smoothed length distribution of $k$-mismatch common substrings 
to estimate the match probability~$p$ in an alignment of the two input
sequences using expression (\ref{p_estimate}). 
The usual {\em Jukes-Cantor} correction is then used to 
estimate the number of substitutions per position that have occurred
since the two sequences separated from their last  common ancestor.

We should mention that our algorithm is not always able to output a distance
value for two input sequences. 
It is possible that the algorithm fails to find a second maximum in the length
distribution of the $k$-mismatch common substrings, so in these cases no
distance can be calculated.

\section{Test Results}
\label{results}
To evaluate our approach, we used simulated and real-world genome sequences.
As a first set of test data, we generated pairs of simulated DNA sequences
of length 500 {\em kb} with varying evolutionary distances
and compared the distances estimated with our algorithm -- {\em i.e.} the
estimated number of substitutions per position -- 
to their `real' distances. For each distance value, we generated 100 pairs
of sequences and calculated the average and standard deviation of the
estimated distance values.  
Figure~\ref{sim_results} shows the results of these test runs.  
 with a parameter 
$k=90$ and a  smoothing window size of $w=31$, with error bars representing 
standard deviations.
A program run on a pair of sequences of length 500 {\em kb} took less than
a second.

Figure~\ref{k90plot} shows a detail of the length distribution for one
of these sequence pairs with various values for~$w$. 
In Figure~\ref{sim_results}, 
the results are reported for a given distance value, if distances could be
computed for at least 75 out of the 100 sequence pairs. 
As can be seen in the figure, our approach
accurately estimates evolutionary distances  up to 0.9 around substitutions
per position. For larger distances, the program did not return a sufficient
number of distance values, so no results are reported here.  
To demonstrate the influence of the parameter~$k$, 
we plotted in Figure~\ref{kvar}, for a given set of parameters, 
the expected number of $k$-mismatch 
common substring extensions of length~$m$,  
calculated with equation~(\ref{p_kmacs_ext}), against~$m$. 

As a real-word test case, we used a set of 27 mitochondrial genomes from primates that has
been used as benchmark data in previous studies on alignment-free
sequence comparison. We applied our method with different values of $k$
and with different  window lengths $w$ for the smoothing. 
In addition, we ran the programs {\em andi} \cite{hau:klo:pfa:14}
and our previously published program {\em Filtered Spaced-Word
Matches (FSWM)}~\cite{lei:soh:mor:17} to these data. As a reference tree,
we used a tree calculated with {\em Clustal $\Omega$}~\cite{sie:wil:din:etal:11}
and {\em Neighbour Joining} \cite{sai:nei:87}. 
To compare the produced trees with this reference trees, we used
the {\em Robinson-Foulds} distance \cite{rob:fou:81} 
and the {\em branch score} distance \cite{kuh:fel:94}
as implemented in the {\em PHYLIP} program package \cite{fel:89}. 
Figure~\ref{mito_res} shows the performance of our approach with different 
parameter values and compares them to the results of {\em andi} and {\em FSWM}.
For the parameter values shown in the figure, our program was able to calculate
distances for all ${27 \choose2}=351$ pairs of sequences.  
The total run time to calculate the 351 distance values 
for the 27 mitochondrial genomes was less than 6 seconds.

\begin{figure}
\begin{center}
\includegraphics[width=\figwidth]{./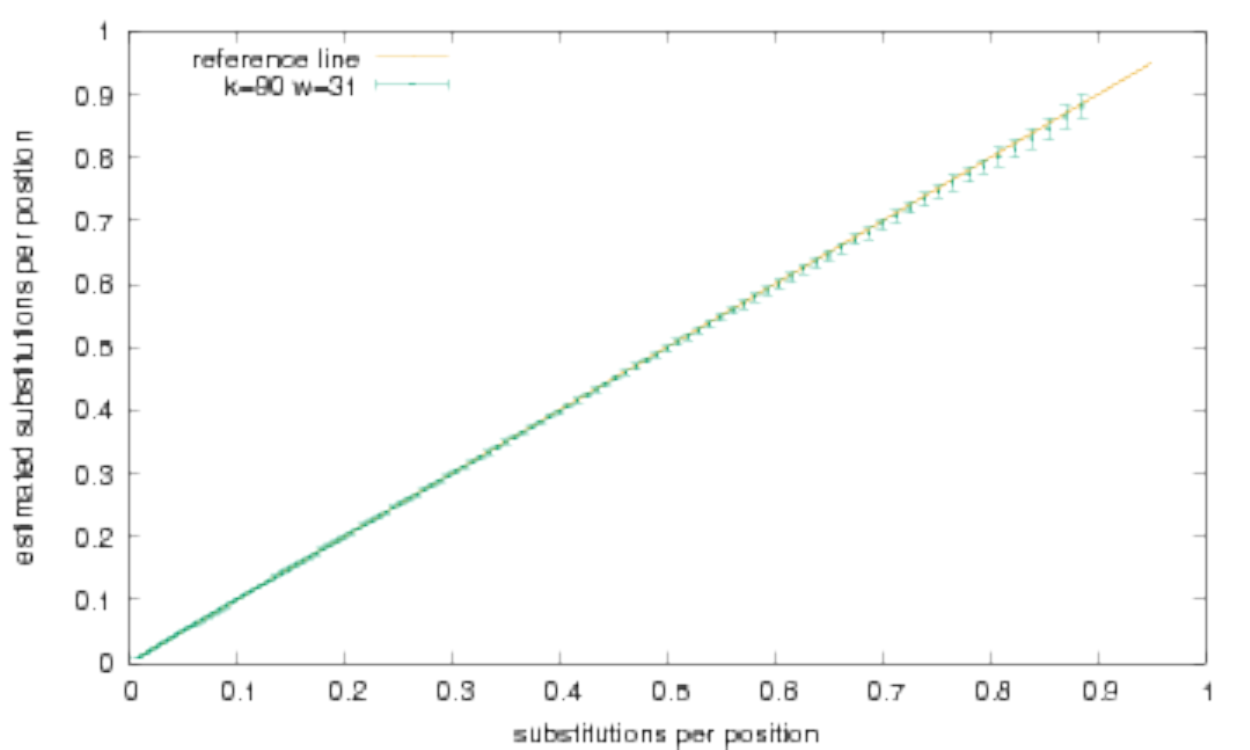}
\end{center}
\caption{\label{sim_results}Estimated distances -- {\em i.e.} estimated 
average number of substitutions per position -- 
for simulated sequence pairs, plotted against the `real' distances.
We used pairs of sequences of length $L=500$~{\em kb} and parameters $k=90$
and $w=31$.}
\end{figure}

\begin{center}
\begin{table}
\begin{tabular}{r|ccccccc} 
     &  k=30 &  k=50 &  k=70 &  k=90  &  k=120  &  k=150  & k=200 \\
\hline
 w=1 & 0.665 & 0.809 & 0.935 &  0.897 &  0.794  &  0.781  & 0.995 \\ 
 w=5 &    -  & 0.839 & 0.835 &  0.784 &  0.783  &  0.773  & 0.880 \\
w=11 &    -  &   -   & 0.869 &  0.808 &  0.788  &  0.781  & 0.863 \\
w=21 &    -  &   -   & 0.813 &  0.824 &  0.824  &  0.804  & 0.817 \\ 
w=31 &    -  &   -   & 0.813 &  0.824 &  0.824  &  0.829  & 0.835 \\
w=51 &    -  &   -   &    -  &      - &  0.824  &  0.819  & 0.820 \\

\end{tabular}
\caption{\label{table}
Distance values calculated with our algorithm for a pair of
simulated sequences of length $L=500$ {\em kb} with a match rate
of $p=0.5$,  corresponding to a distance of $0.824$ substitutions 
per position. Dashes indicate that no distance value could be calculated
since our algorithm could not find the second local maximum in the length
distribution of the $k$-mismatch common substrings.}
\end{table}
\end{center}

\begin{figure}
\begin{center}
\includegraphics[width=\figwidths]{./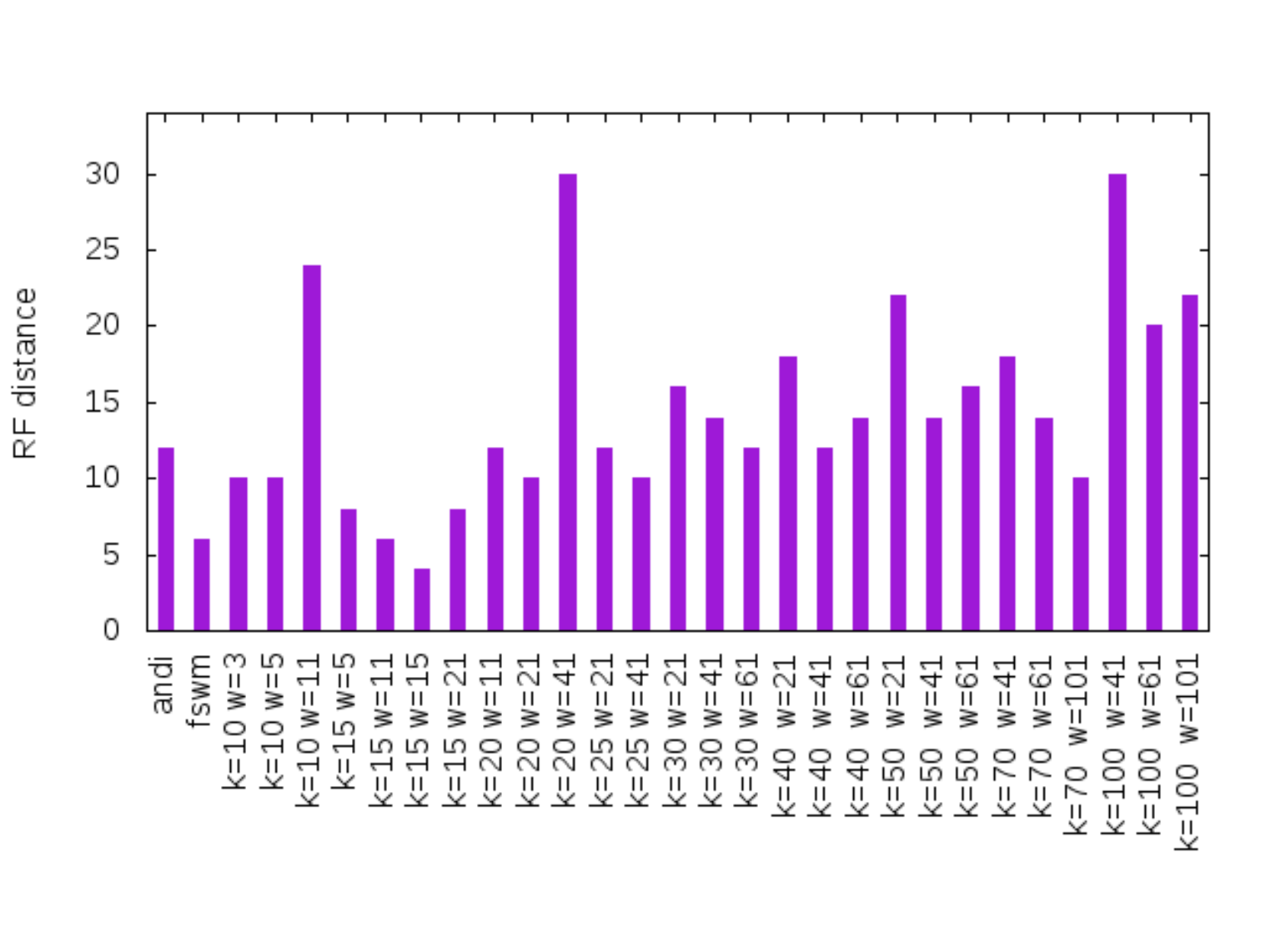}

\includegraphics[width=\figwidths]{./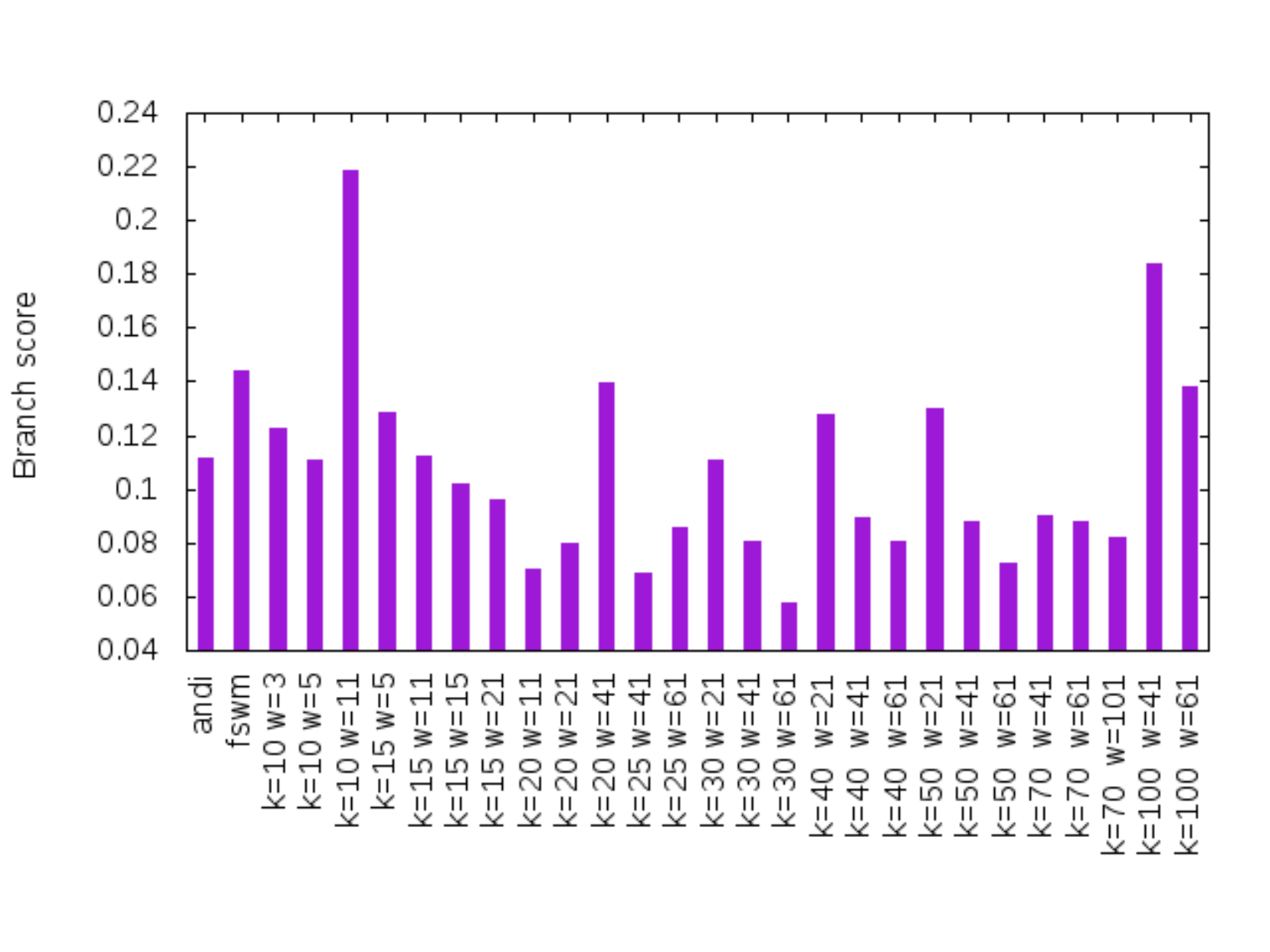}

\caption{\label{mito_res}Evaluation of various alignment-free 
methods for phylogeny reconstruction on on a set of 27 primate 
mitochondrial genomes. {\em Robinson-Foulds} distances (top)
and {\em branch scores} (bottom) 
were calculated to measure the difference 
between the resulting trees and a reference tree obtained with
 {\em Clustal}~$\Omega$ and {\em Neighbour Joining}.} 
\end{center}
\end{figure}

\section{Discussion}
\label{discussion}
In this paper, we introduced a new way of estimating phylogenetic
distances between genomic sequences. We showed that the average number
of substitutions per position since two sequences have separated from their
last common ancestor can be accurately estimated from the position of 
local maximum in the smoothed length
distribution of $k$-mismatch common substrings. 
To find this local maximum, we used a naive search procedure  on the smoothed
length distribution. 
Two parameter values have to be specified in our approach,  
the number $k$ of mismatches and the size $w$ of the smoothing window
for the length distribution.
Table~\ref{table} shows that our distance estimates are reasonably stable
for a range of values of $k$ and $w$.  

A suitable value of the parameter $k$ is important  to separate 
the `homologous' peak from the `background' peak in the length distribution
of the $k$-mismatch common substrings. 
As follows from theorem \ref{prob_increase}, the distance between these
two peaks is proportional to $k$. The value of~$k$ 
must be large enough to ensure that the homologous peak
has a sufficient distance to the background peak to be detectable, see
Figure~\ref{kvar}. 
Our data show, on the other hand, that our distance estimates become
less precise if $k$ is too large. 

Specifying a suitable size~$w$  of the smoothing window is also important
to obtain accurate distance estimates;  
a large enough window is necessary to 
avoid ending up in a local maximum of the raw length distribution.  
For the data shown in Figure~\ref{k90plot}, for example, 
our approach finds the second maximum 
of the length distribution at 179 if a window width of  $w=31$ is chosen.
From this value, the  match probability~$p$ is estimated as 
\[ \hat{p} = \frac{179+1-90}{179+1} = 0.5 \]
using equation (\ref{increas_cond}),  
corresponding to 0.824 substitutions per position according to 
the {\em Jukes-Cantor} formula. This was exactly the value that we 
used to generate this pair of sequences.   

With window lengths of $w=21$ and $w=1$ (no smoothing at all), however,
the second local maxima of the length distribution would be found at
181 and 171, respectively, leading to distance estimates of 0.808 ($w=11$) 
and 0.897 ($w=1$). 
If the width $w$ of the smoothing window is too large, on the other hand,  
the second peak may be obscured by the first `background' peak. In this case, no
peak is found and no distance can be calculated. In 
Figure~\ref{k90plot}, for example, this happens with if a window width $w=51$
is used. 
Further studies are necessary
to find out suitable values for $w$ and $k$, depending on the length
of the input sequences.

Finally, we should say that we used a rather naive way to identify possible
homologies that are then extended to find $k$-mismatch common substrings. 
As becomes obvious from the size of the homologous and background peaks
in our plots, our approach finds far more background matches than
homologous matches. Reducing the noise of background matches should 
help to find the position of the homologous peak in the length distributions. 
We will therefore explore alternative ways to find possible homologies
that can be used as starting points for $k$-mismatch common substrings.

\bibliographystyle{abbrv}



\end{document}